\newcommand{\TT}{\mathit{TT}}
\newcommand{\MT}{\mathit{MT}}
\newcommand{\RSA}{\mathit{RSA}}
\definecolor[named]{tucgreen}{RGB}{0,140,79}
\newcommand{\para}[1]{\smallskip\noindent\textbf{#1}}
\title{\vspace{0cm} Risk-Limiting Tallies}
\author{Wojciech Jamroga$^{1,2}$,\ Peter B. Roenne$^{1}$,\ Peter Y. A. Ryan$^{1}$,\\ and Philip B. Stark$^{3}$}
\institute{
  Interdisciplinary Centre for Security, Reliability, and Trust, SnT, University of Luxembourg
  \and
  Institute of Computer Science, Polish Academy of Sciences, Warsaw, Poland
  \and
  Department of Statistics, University of California, Berkeley, US \\
  \email{\{wojciech.jamroga,peter.roenne,peter.ryan\}@uni.lu, stark@stat.berkeley.edu}
  }
\authorrunning{W. Jamroga \and P. B. Roenne \and P. Y. A. Ryan \and and P. B. Stark}
\titlerunning{Risk-Limiting Tallies}
\date{\today}
\begin{document}

\maketitle

\begin{abstract}
Many voter-verifiable, coercion-resistant schemes have been proposed, but even the most carefully designed systems necessarily leak information via the announced result. In corner cases, this may be problematic. For example, if all the votes go to one candidate then all vote privacy evaporates. The mere possibility of candidates getting no or few votes could have implications for security in practice: if a coercer demands that a voter cast a vote for such an unpopular candidate, then the voter may feel obliged to obey, even if she is confident that the voting system satisfies the standard coercion resistance definitions.
With complex ballots, there may also be a danger of ``Italian'' style (aka ``signature'') attacks: the coercer demands the voter cast a ballot with a specific, identifying pattern.

Here we propose an approach to tallying end-to-end verifiable schemes that avoids revealing all the votes but still achieves whatever confidence level in the announced result is desired. Now a coerced voter can claim that the required vote must be amongst those that remained shrouded. Our approach is based on the well-established notion of Risk-Limiting Audits, but here applied to the tally rather than to the audit. We show that this approach  counters coercion threats arising in extreme tallies and ``Italian'' attacks.
We illustrate our approach by applying it to the Selene scheme, and we extend the approach to Risk-Limiting Verification, where not all vote trackers are revealed, thereby enhancing the coercion mitigation properties of Selene.

\keywords{End-to-end verifiability, risk-limiting audits, plausible deniability, coercion resistance.}

\let\thempfn\relax
\footnotetext[0]{Licence: CC-BY-ND 2.0 \url{https://creativecommons.org/licenses/by-nd/2.0/}}

\end{abstract}


\section{Introduction}

Many verifiable voting schemes have been proposed that are designed to give a high level of resistance against coercion or vote buying~\cite{Benaloh94receipt,juels05coercion,crs:pav,selene,Ryan09prettygood}. However, it is typically assumed that little can be done about coercion threats in case of extreme outcomes, e.g., no or few votes for some candidates.
Unfortunately, such situations do happen in real elections. 
 If there is a (perceived) risk that candidate $X$ will get no votes, and the coercer tells the voter to vote for $X$, then the voter may feel obliged to comply even if the voting scheme satisfies the standard definitions of coercion resistance.
The possibility is more dangerous than it seems at the first glance. True, coercing for the low support candidate $X$ is unlikely to get him or her win. However, the coercer can use $X$ to construct what is effectively an abstention attack, and take away the votes from the main opponent of his preferred candidate. 
If the coercer prefers candidate $A$, he can help him win by coercing supporters of $B$ to vote for $X$.

Another difficulty that may arise is that of so-called ``Italian''-style attacks, also known as ``signature'' attacks: if the voting method allows for a large number of distinct ways of filling out the ballot, a coercer may require the voter to fill out the ballot with a distinctive pattern allowing it to be uniquely identified with high probability in the final tally.
This is especially an issue with long, complex ballots and with preferential voting schemes.
It can be countered by, for example, using homomorphic tallying techniques to compute the overall result without revealing the individual ballots, but this is computationally intensive and even then may leak some critical information \cite{teague}.

Here, we show that risk-limiting audit techniques~\cite{lindemanStark12} can be adapted to achieve whatever level of confidence in the outcome is required while ensuring that a proportion of the ballots remain shrouded. This allows us to significantly enhance the coercion resistance of verifiable schemes.
The Risk Limiting Tally (RLT) approach that we present here provides a simple way to guarantee voters plausible deniability against the above attacks: a coerced voter, who did not cast the ballot the way the coercer had demanded, can simply claim that the required ballot is amongst those unrevealed.

We also present a variant of the idea, \emph{Risk Limiting Verification} (RLV), where we ensure that a proportion of verification tokens remain unrevealed. The basic version of the Selene scheme \cite{selene} has the drawback that the the coercer can claim that the fake tracker provided by a voter is his own. We describe how RLV mitigates this.




A possible objection to RLT is that it is ``undemocractic'' not to count all votes.
However, the method allows the electoral outcome (i.e. the winner or winners) to be ascertained to
any desired level of statistical certainty.
Moreover, the sample of ballots will be drawn in such a way that every cast vote has an equal chance of being in the revealed sample, so there is no lack of fairness.
RLTs are related to
\emph{random sample voting} (RSV), due to Chaum~\cite{chaumRSV}, except that there the sample is drawn from the set of eligible voters, rather than from the cast votes.
If anything, RLTs seem to be more democratic in that in RSV voters who are not chosen might well feel excluded. Furthermore, in RLTs we are able to adjust the sample size after voting to achieve the desired confidence level. We note also that some tally algorithms, e.g. some forms of STV, intrinsically involve a probabilistic element.




The outline of this paper is as follows. In section \ref{sec:coercionverifiable} we discuss coercion-resistant and verifiable voting schemes in general, and Selene in particular. Section \ref{sec:RLA} briefly recalls Risk-Limiting Audits, and Section \ref{sec:RLT} introduces the techniques for RLTs. We present the actual protocol in Section \ref{sec:protocol}, and a brief security discussion in Section \ref{sec:security}. Section \ref{sec:rlv} discusses the risk-limiting verification. Related work is presented in Section \ref{sec:relatedwork}, and we conclude in Section \ref{sec:conclusions}.

\subsection{Contribution}

In this paper we present several contributions:

\begin{enumerate}
    \item The use of using risk-limiting techniques to shroud a proportion of votes, improving coercion resistance while achieving whatever confidence level is required.
    \item A novel extension of Risk-Limiting Audit (RLA) techniques to handle the situation in which we do not have an initial null hypothesis (reported outcome).
    \item A new test statistic for RLAs with operating characteristics that do not depend on the reported votes, only on the reported winner(s).
    \item Protocols to enable RLT for most end-to-end verifiable schemes, including strategies to ensure plausible deniability whatever the vote distribution is.
    \item Extension of the approach to Risk-Limiting Verification: the shrouding of a randomly selected subset of verification tokens to improve coercion resistance, in particular for the Selene scheme.
\end{enumerate}

\section{Coercion-Resistant and Verifiable Voting}\label{sec:coercionverifiable}

We set the scene by recalling the concepts of End-to-End Verifiability~\cite{Ryan15verifiability} and coercion-resistance~\cite{Benaloh94receipt,juels05coercion}, and showing an example scheme designed to balance the two requirements.

\subsection{An Outline of End-to-End Verifiable Voting}

The use of digital technologies to record and process votes might provide efficiency and convenience, but it can also bring serious new threats, in particular, virtually undetectable ways to manipulate votes on a large scale.
These concerns motivated the development of \emph{End-to-End Verifiable} (E2E~V) voting schemes.
Such schemes provide the voters with means to confirm that their vote is accurately included in the tally, without opening up possibilities of coercion or vote-buying.
This is usually accomplished by creating an encryption of the vote at the time of casting, and posting this to a public \emph{Bulletin Board} (\emph{BB}). Voters can then confirm that their ``receipt,'' i.e., the encrypted vote, appears correctly on the \emph{BB}.

Once we have consensus on the correct set of encrypted votes, these can be processed in a verifiable fashion to calculate the outcome in a way that does not compromise the privacy of the votes.
For instance, the encrypted votes might be put through a sequence of verifiable re-encryption mixes and then verifiably decrypted, allowing anyone to compute the result.
Alternatively, the encrypted votes might be tallied under encryption, exploiting the homomorphic properties of the encryption algorithm, and the final result decrypted.
To complete the assurance argument we need some additional ingredients:

\begin{itemize}
    \item The voter needs to be confident that her intended vote is correctly encrypted in her receipt.
    \item We need to prevent ballot stuffing, i.e., we need to ensure that only legitimately cast votes appear in the list of receipts on the \emph{BB}, and only one per voter.
    \item We need to know that ``enough'' voters check that their intended votes are correctly encrypted, and that their encrypted votes appear on \emph{BB}.
    \item We need dispute-resolution mechanisms in place to ensure that if voters detect (or claim to detect) problems, the culprit can be identified and appropriate action taken.
\end{itemize}

Typically the first point is addressed by some form of cut-and-chose protocol, e.g. \emph{Benaloh Challenge}~\cite{Benaloh2006}, or a more sophisticated approach such as Neff's \emph{MarkPledge} scheme~\cite{neff}.
Ballot stuffing is usually countered either by procedural measures in the polling station, or by requiring that receipts be digitally signed by the voters. The former does not provide \emph{universal eligibility verifiability} while the latter can but requires infrastructure to equip voters with signing keys.


We will not delve deeper into how the various E2E~V schemes work but rather assume that the correct set of encrypted votes is posted to the \emph{Bulletin Board}.

\subsection{Ballot Privacy, Receipt-Freeness and Coercion Resistance}

Ballot privacy is often defined using anonymity style definitions as originally proposed in \cite{DBLP:conf/esorics/SchneiderS96}. Informally, consider two instances of the system, one in which $A$ votes for $X$ and $B$ for $Y$, and the other in which the votes are swapped. If the attacker is unable to distinguish these two instances then the system is deemed to satisfy \emph{ballot privacy}.
More formal definitions can be found, for example, in~\cite{Delaune:2010}. Note that even in extreme cases, for example when all voters vote for $X$, such a system will satisfy the above definition, even though in that case the attacker knows precisely how each voter voted.

It was later realised that simple notions of ballot privacy in the presence of a passive attacker are not enough. For E2E V schemes we have to worry about ways that the voter might be able to prove her vote to a third party. This motivates the requirement for \emph{receipt-freeness}~\cite{Benaloh94receipt}: the voter cannot acquire evidence that would enable her to construct a proof to a third party as to how she voted.

In the face of a yet more active attacker who might interact with the voter before, during and after voting, potentially issuing detailed instructions and requiring the voter to reveal credentials, ephemeral random values etc, we need even stronger notions. This threat model motivates the property of \emph{coercion resistance} for which many different definitions have been proposed~\cite{juels05coercion}, reflecting various subtle distinctions. We will adopt the following definition, informally stated:
\begin{quote}
A voting system $S$ is coercion resistant if,
for all $c \in \mathcal{C}$ there exists a voter strategy $\psi$ such that for all attacker strategies $\phi$,  the voter can cast her intended vote $c$ and the attacker cannot tell that she did not obey his instructions.
\end{quote}



Such a style of definition appears to be the most powerful in that it captures the privacy failure in the case of unanimous votes, forced abstention and randomisation attacks.



\subsection{Selene}

We now give a sketch of how voter-verification is achieved in the Selene voting protocol. Full details can be found in \cite{selene}. In Selene, the verification is much more direct and intuitive than is the case for conventional E2E~V systems: rather than checking for the presence of her encrypted vote on the \emph{BB}, the voter checks her vote in cleartext in the tally on the \emph{BB} identified by a secret, deniable tracker.

During the setup phase the set of distinct trackers are posted on the $BB$, verifiably encrypted and mixed and then assigned to the voters according the resulting secret permutation. This ensures that each voter is assigned a unique, secret tracker.

For each encrypted tracker, a trapdoor commitment is created for which the voter holds the secret trapdoor key. In essence this is the ``$\beta$'' term of an El Gamal encryption of the tracker, where the ``$\alpha$'' term is kept secret for the moment.

Voting is as usual: an encryption of the vote is created, and sent to the server for posting to the $BB$ against the voter (pseudo)Id. Once we are happy that we have the correct set of validly cast, encrypted votes, we can proceed to tabulation: the (encrypted vote, tracker) pairs are put through verifiable, parallel re-encryption mixes and decrypted, revealing the vote/tracker pairs in plaintext.

Later, the $\alpha$ terms are sent via an untappable channel to the voters to enable them to open the commitment using their secret, trapdoor key. If coerced, the voter can generate a fake $\alpha$ that will open her commitment to an alternative tracker pointing to the coercer's choice. With the trapdoor, creating such a fake $\alpha$ is computationally straightforward. On the other hand, computing a fake $\alpha$ that will open the commitment to a given, valid tracker is intractable without the trapdoor. Thus, assuming that the voter's trapdoor is not compromised, the $\alpha$ term is implicitly authenticated by the fact that it opens to a valid tracker.

\section{Risk-Limiting Audits}\label{sec:RLA}

A \emph{risk-limiting audit} (RLA)~\cite{lindemanStark12} of a reported election outcome
is any procedure that has a known minimum chance of correcting the reported outcome if the outcome is wrong
(and that cannot render a correct outcome incorrect).
In this case, the \emph{outcome} means the winner or winners, not the precise tally.
The reported outcome is \emph{correct} if it is the outcome that an accurate
manual tally of the underlying voter-verified records would show.\footnote{%
The trustworthiness of the underlying records should be assessed
by a \emph{compliance audit}~\cite{starkWagner12}.
A RLA that relies on an untrustworthy record cannot reliably assess
whether outcomes reflect how voters voted.
}
The maximal chance that the procedure will fail to correct an outcome
that is wrong is the \emph{risk limit}.

RLAs generally pose auditing as a sequential test of the hypothesis that
the reported outcome is incorrect.
The audit continues to examine more ballots until either the hypothesis
is rejected or the audit has conducted a full manual tally.
The use of sequential tests enables RLAs to stop as soon
as there is convincing evidence that the reported outcome is correct, reducing the number of ballots the audit inspects.

RLAs check reported outcomes while RLTs determine what outcomes to report. However, similar sequential testing methods can allow RLTs to stop the tally
(of a random permutation of the ballots) as soon as there is convincing statistical evidence
of the electoral outcome, which the RLT then reports.
A RLT declares ``either this is the correct outcome, or an event occurred that had probability no larger than $\alpha$,'' where $\alpha \in (0, 1)$
is any pre-specified risk limit.
Minimizing the number of ballots that must be tallied maximizes the number of ballots kept shrouded, improving privacy and coercion-resistance.

There are two general strategies for RLAs: \emph{ballot-polling} and \emph{comparison}.
Ballot-polling manually examines randomly selected ballots for evidence of who won.
A comparison audit has three steps: first, the voting system must commit to its interpretation of physically identifiable individual ballots or groups of ballots comprising all ballots validly cast in the election.
Second, auditors check that the exported data reproduces the reported results.
Third, auditors compare the manual interpretation of a random sample of ballots or groups of ballots to the voting system's interpretation.
Further, ``hybrid'' methods combine ballot-polling for some groups of ballots and comparisons for other groups; see \cite{ottoboniEtal18}.

Comparison audits require auditors to know how the equipment interpreted the ballots, so they are not suitable for RLTs, where we seek evidence
about who won just from a subset of the shrouded votes.
Below, we show how a new procedure for ballot-polling RLAs can be adapted for RLTs.

\section{Risk-Limiting Tallies}\label{sec:RLT}

We propose a simple modification of the way that votes are tallied to address the issues outlined in the introduction.
Rather than tallying all votes straight-off, the election authority reveals the votes for a random sequence of encrypted ballots, continuing until the sample gives the acceptable level of risk in the outcome (i.e., who won).
If the true margin of victory is not too small, the outcome can be determined with high confidence (i.e., low risk)
while leaving a substantial number of ballots unopened, thus allowing a voter to claim that they cast the ballot required
by the coercer even if such a ballot was not revealed during the partial tally. 
The approach is thus inspired by the idea of Risk-Limiting Audits (RLAs), \cite{stark08a,lindemanStark12},
but here we apply the approach to  determining the correct outcome rather than checking whether a reported outcome is correct.
That difference turns out to have surprising statistical implications;
in particular, larger sample sizes are generally required to control the risk to the same level.

RLAs test the null hypothesis that the reported winner(s) did not actually win, rather than determine the correct outcome
\emph{ab initio}.
Moreover, the operating characteristics of existing RLAs depend on the reported results.
For instance, \emph{comparison audits} test whether the reported margin overstated
the true margin by enough to cause the reported winners to be incorrect.
Previous methods for \emph{ballot-polling} audits, such as BRAVO \cite{lindemanEtal12} test the hypothesis that the reported outcome is wrong against the alternative
that the reported vote shares are nearly correct.

For RLTs, we do not have reported results to leverage, so we need a new approach.
Section~\ref{sec:mean} presents a probability inequality; Section~\ref{sec:RLTallies} applies it to
produce a new sequential ballot-polling test, the engine for the RLT scheme presented in Section~\ref{sec:protocol} based on the Selene E2E~V protocol.

\subsection{Tests for the Mean of a Non-Negative Population}\label{sec:mean}

Extant methods for RLAs generally involve the reported results in
some way.
Here, we present a new sequential method to determine with high
confidence who won, without specifying a particular alternative hypothesis.
The method applies to plurality (including vote-for-$k$), majority, and super-majority social choice functions,
but we present the method in detail only for plurality contests.

Our RLT method is based on tests about the mean of a non-negative population.
Consider a population of $N$ items, each labeled with a non-negative number.\footnote{%
  In our case, the items will be ballots, and their labels will represent votes; see Section~\ref{sec:RLTallies}. }
Let $x_i \ge 0$ be the label of item $i$, $i=1, \ldots, N$.
Let $\mu \equiv \frac{1}{N}\sum_{i=1}^N x_i$ be the mean
of the labels.
Moreover, let $t$ denote the hypothesized value of the population mean $\mu$.

We sample items at random, sequentially, without replacement,
such that the (conditional) probability that item $k$ is selected in the $j$th draw is $\frac{1}{N-j+1}$,
given that item $k$ was not selected before the $j$th draw. $X_j$ denotes the number on the label of the item selected on the $j$th draw.
Define $S_j \equiv \sum_{k=1}^j X_k$, $\tilde{S}_j \equiv S_j/N$, and $\tilde{j} \equiv 1 - (j-1)/N$.
Let
\begin{equation} \label{eq:martDef}
   Y_n \equiv \int_0^1 \prod_{j=1}^n \left ( \gamma \left [ X_j \frac{\tilde{j}}{t - \tilde{S}_{j-1}} -1 \right ] + 1 \right ) d\gamma.
\end{equation}
It has been shown in~\cite{evansStark19} that if $\mu=t$ (i.e., if the null hypothesis is true), then $(Y_j )_{j=1}^N$ is a nonnegative closed martingale with expected value 1.
Kolmogorov's inequality then implies that for any $J \in \{1, \ldots, N\}$ and any $p \in (0, 1)$,
$$
   \Pr \left ( \max_{1 \le j \le J} Y_j(t) > 1/p \right ) \le p.
$$
This can be used as the basis of a ballot-polling RLA that does not require a reference tally, as we show below.
The same result holds for sequential sampling \emph{with} replacement, re-defining
$\tilde{S}_j \equiv 0$ and $\tilde{j} \equiv 1$
(the limit of the finite-population result as $N \rightarrow \infty$).
We also note that \cite{evansStark19} provides a recursive algorithm for computing the integral \eqref{eq:martDef}.

\subsection{Risk-Limiting Tallies}\label{sec:RLTallies}

Consider plurality contests that allow each voter to vote for $k \ge 1$
of $C$ candidates.
The winner(s) are the $k$ candidates who receive the most votes.
We ignore the possibility of ties; they are an easy extension.
Majority and super-majority are straightforward generalizations; see \cite{stark08a}.

Candidate $w$ is one of the winners if $w$ received more votes than at least $C-k$
other candidates.
In general, some ballots will have invalid votes or votes for other candidates.
Consider a single pair of candidates, $w$ and $\ell$.
Let $N_w$ denote the number of ballots in the population that show a vote for $w$ but not
for $\ell$; let $N_\ell$ denote the number of ballots in the population that show a vote for $\ell$ but not for $w$,
and let $N_u \equiv N - N_w - N_\ell$ denote the number of ballots
that show a vote for neither $w$ nor $\ell$ or show votes for both $w$ and $\ell$.

Let $W_j$ be the number of items labeled with $w$ selected on or before draw $j$; and define $L_j$ analogously.
The probability distributions of those variables 
depend on $N_w$, $N_\ell$, and
$N_u$, even though we only care about one parameter, $N_w - N_\ell$.
Now $N_w \le N_\ell$ if and only if $N_w + N_u/2 \le N_\ell + N_u/2$.
Since $N_\ell + N_u/2 = N - (N_w + N_u/2)$,
we have $N_w + N_u/2  \le N - (N_w + N_u/2)$.
We can now divide by $N$ to obtain $\frac{N_w + N_u/2}{N}  \le  1 - \frac{N_w + N_u/2}{N}$ from which we get
\begin{equation}
    \frac{N_w + N_u/2}{N}  \le \frac{1}{2}\ .
\end{equation}


Let
$$
   \mu_{w\ell} \equiv \frac{1 \times N_w + \frac{1}{2} \times N_u + 0 \times N_\ell}{N}\ .
$$
This is the mean of a population derived from re-labeling each vote for $w$ as 1,
each vote for $\ell$ as 0, and the rest as $1/2$.
The mean of this population is greater than $1/2$ iff $w$ received more votes than $\ell$.
We can test the hypothesis $\mu_{w\ell} \le 1/2$ (i.e., $w$ did not beat $\ell$)
using the martingale-based test above by simply treating the sampled ballots that way: every ballot with a vote for $w$ (but not $\ell$) counts as $1$, every ballot with a vote for $\ell$ (but not for $w$) counts as $0$, and invalid ballots, ballots with votes for other candidates, and ballots with votes for both $w$ and $\ell$ count as $1/2$.

To determine the set of winners, we sequentially test the collection of
$C(C-1)$ hypotheses
\begin{equation}
   \{H_{w\ell}: \mu_{w\ell} \le 1/2, w = 1, \ldots, C; \ell = 1, \ldots, C; w \ne \ell \},
\end{equation}
stopping when either
\begin{itemize}
    \item there is a set $\mathcal{W}$ of cardinality $k$ such that we have rejected the hypothesis $\mu_{w\ell} \le 1/2$ for every $(w, \ell)$ with $w \in \mathcal{W}$ and $\ell \notin \mathcal{W}$, or
    \item we have examined a too high percentage of votes from the privacy point of view, in which case the sampling strategy is abandoned and different means are used to determine with certainty who won, see Section~\ref{sec:fallback} for details.
\end{itemize}

\begin{proposition}
  If every hypothesis is tested at level $\alpha$,
  the probability that this algorithm misidentifies the set of winner(s) is at most $k(C-k)\alpha$.
\end{proposition}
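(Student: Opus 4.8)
The plan is to reduce the multi-candidate misidentification event to a union of single-pair Type~I errors and then apply a union bound. First I would fix the \emph{true} set of winners $\mathcal{W}^*$, the $k$ candidates who actually received the most votes (ties are excluded by assumption). The algorithm misidentifies the winners exactly when it declares some set $\mathcal{W}$ of cardinality $k$ with $\mathcal{W} \neq \mathcal{W}^*$; the fallback branch determines the winners with certainty, so it contributes no error and can be set aside.

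The crux is a combinatorial observation about the stopping rule. If $\mathcal{W} \neq \mathcal{W}^*$ and both have cardinality $k$, then $\mathcal{W} \setminus \mathcal{W}^*$ and $\mathcal{W}^* \setminus \mathcal{W}$ are both nonempty, so I can pick a candidate $w \in \mathcal{W} \setminus \mathcal{W}^*$ (a true loser wrongly placed among the reported winners) and a candidate $\ell \in \mathcal{W}^* \setminus \mathcal{W}$ (a true winner wrongly excluded). Because the algorithm declared $\mathcal{W}$, the stopping rule forced it to reject $H_{w\ell}$, since $w \in \mathcal{W}$ and $\ell \notin \mathcal{W}$. But $\ell$ is a true winner and $w$ a true loser, so $w$ did not receive more votes than $\ell$; by the equivalence established above ($\mu_{w\ell} > 1/2$ iff $w$ got more votes than $\ell$) this gives $\mu_{w\ell} \le 1/2$, i.e., $H_{w\ell}$ is in fact \emph{true}. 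Thus every misidentification forces the false rejection of at least one true hypothesis $H_{w\ell}$ with $w \notin \mathcal{W}^*$ and $\ell \in \mathcal{W}^*$.

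Then I would count these hypotheses and bound the tail. There are exactly $k$ choices of a true winner $\ell \in \mathcal{W}^*$ and $C-k$ choices of a true loser $w \notin \mathcal{W}^*$, giving $k(C-k)$ relevant pairs, each defining a true null that is tested at level $\alpha$. Here the level-$\alpha$ guarantee is the \emph{sequential} one supplied by Kolmogorov's inequality applied to the martingale $(Y_j)$, so it already bounds the chance that the test ever rejects over the whole sampling run. Since the misidentification event is contained in the union of these $k(C-k)$ false-rejection events, a union bound yields probability at most $k(C-k)\alpha$; note the union bound needs no independence and is therefore unaffected by the dependence among the tests that run simultaneously on the same sample.

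The part to get right is the combinatorial reduction rather than any calculation: one must see that it suffices to control only the $k(C-k)$ winner-versus-loser comparisons, not all $C(C-1)$ hypotheses, since a naive union bound over everything would give only the weaker bound $C(C-1)\alpha$. The one subtlety is confirming that any wrongly declared set necessarily collides with $\mathcal{W}^*$ in exactly the way needed to expose a true null, and that the per-hypothesis ``level $\alpha$'' already subsumes the maximal, sequential nature of each test.
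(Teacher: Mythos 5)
Your proof is correct and follows essentially the same route as the paper: identify the $k(C-k)$ true null hypotheses that compare a true loser to a true winner, observe that any misidentification forces the (sequential) false rejection of at least one of them, and apply a Bonferroni/union bound. Your write-up is in fact somewhat more explicit than the paper's --- in particular the set-difference argument showing that any wrongly declared $\mathcal{W}$ must expose such a pair --- but the decomposition and the bound are identical.
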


\begin{proof}
The approach misidentifies one or more winners iff it terminates in the first branch, but $\mathcal{W}$ is not the set of winners:
$\exists w \in \mathcal{W}$, $\ell \notin \mathcal{W}$ s.t.
$\mu_{w\ell} \le 1/2$.
In a RLA, a wrong outcome can only be confirmed if \emph{every} true null hypothesis is erroneously rejected.
In contrast, in a RLT, a wrong outcome can be confirmed if just one particular true
null hypothesis is rejected: the hypothesis that the candidate with the $k+1$st
highest vote share got fewer votes than the candidate with the $k$th highest vote share.

There are $C(C-1)$ hypotheses $\{H_{w,\ell}\}$ in all, of which $C(C-1)/2$ are true.
Of the true null hypotheses, those whose erroneous rejection would make the reported outcome wrong
are the $k(C-k)$ that compare the vote share of a candidate in $\mathcal{W}$ to the vote share of a candidate in $\mathcal{W}^c$:
if none of those is erroneously rejected, the set of winners is correct.
Observe that if we used the logical implications of the statistical rejections to entail rejections of other
hypotheses---for instance, $H_{w\ell} \cap H_{\ell k} \rightarrow H_{w k}$---this would not be true.
Therefore, a Bonferroni multiplicity adjustment of $k(C-k)$ certainly suffices.
Note that this may be conservative as an estimate, because there are logical dependencies among the hypotheses.
\qed
\end{proof}
The aim of the sampling is to test the hypothesis ``$\mu_{w\ell}\le 1/2$.'' Rejecting $\mu_{w\ell}\le 1/2$ means proving with
risk at most $\alpha$ that $w$ won the pairwise contest with $\ell$.

\begin{proposition}
If we reject $\mu_{w\ell} \le 1/2$ at significance level $\alpha$ and reject $\mu_{\ell m} \le 1/2$ at significance level $\alpha$, then we
reject $\mu_{wm} \le  1/2$ at significance level $\alpha$.
\end{proposition}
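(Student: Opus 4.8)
The plan is to read the conclusion ``we reject $\mu_{wm}\le 1/2$'' as the \emph{derived} rule that declares $w$ to beat $m$ exactly when both $\mu_{w\ell}\le 1/2$ and $\mu_{\ell m}\le 1/2$ have been rejected, and then to show that this derived rule is itself a valid level-$\alpha$ test of $H_{wm}:\mu_{wm}\le 1/2$. I would deliberately avoid manipulating the martingale \eqref{eq:martDef} directly: the relabelings for $(w,\ell)$, $(\ell,m)$ and $(w,m)$ are not pointwise ordered, so a path-by-path comparison of the statistics looks both intractable and unnecessary.

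First I would recall the parameter-level dictionary from Section~\ref{sec:RLTallies}: writing $V_x$ for the number of ballots showing a vote for candidate $x$, we have $\mu_{w\ell}>1/2$ iff $V_w>V_\ell$, so the null $H_{w\ell}$ is true exactly when $V_w\le V_\ell$. The combinatorial heart of the argument is transitivity of this population ordering. If the target null $H_{wm}$ is true, so that $V_w\le V_m$, then we cannot simultaneously have $V_w>V_\ell$ and $V_\ell>V_m$ (these would force $V_w>V_m$); equivalently, whenever $H_{wm}$ holds, \emph{at least one} of the component nulls $H_{w\ell}$, $H_{\ell m}$ is true for the actual (fixed) population.

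With this in hand I would bound the Type~I error of the derived rule. Fix a population with $V_w\le V_m$ and let $H^\ast$ be a component null that is true for it, which exists by the previous step. Pathwise, ``reject $H_{w\ell}$ and reject $H_{\ell m}$'' is contained in ``reject $H^\ast$,'' hence
$$
   \Pr(\text{reject } H_{w\ell}\ \text{and}\ \text{reject } H_{\ell m}) \le \Pr(\text{reject } H^\ast) \le \alpha ,
$$
where the last step is exactly the level-$\alpha$ validity of the individual martingale test --- Kolmogorov's inequality applied to $(Y_j)$ as in Section~\ref{sec:mean} --- which holds whenever the tested null is true. This is the required risk bound.

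The one place that demands care, and which I expect to be the crux, is the constant. A naive union bound over the two component rejections would only give $2\alpha$. What recovers $\alpha$ is precisely transitivity: it guarantees a \emph{single} true null $H^\ast$ whose erroneous rejection is already implied by rejecting both, so no union bound is needed and the fact that all three tests share the same sample is harmless (we use only an event containment together with the marginal validity of one test). I would state this explicitly, since it is the reason the conclusion holds at level $\alpha$ rather than $2\alpha$.
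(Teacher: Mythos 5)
Your proof is internally correct, but it establishes the proposition by a genuinely different route from the paper, and in doing so proves a subtly weaker statement. The paper's one-line argument is a \emph{pathwise} claim: by monotonicity of the $P$-values of the statistic \eqref{eq:martDef} in the sample counts for each candidate at each draw $j$, the data that cause the direct tests of $H_{w\ell}$ and $H_{\ell m}$ to cross the $1/\alpha$ threshold force the \emph{direct} test of $H_{wm}$ to have crossed it as well --- a deterministic implication among the three tests that the algorithm of Section~\ref{sec:RLTallies} actually runs. You instead define the derived rule ``declare $w$ beats $m$ once both component nulls are rejected'' and show it is a valid level-$\alpha$ test of $H_{wm}$, via parameter-space transitivity ($V_w\le V_m$ forces at least one component null to be true) plus event containment; this neatly avoids the union bound and the spurious factor $2$, and it is fully rigorous given only the marginal validity of each martingale test, whereas the paper's monotonicity claim is asserted rather than verified and is not obvious for the integral statistic \eqref{eq:martDef} (as you note, the three relabelings are not pointwise comparable). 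What your version does not deliver is the coherence of the direct tests themselves: it licenses \emph{treating} the conjunction as a rejection of $H_{wm}$, but does not show that the test of $H_{wm}$ consulted by the stopping rule would itself reject --- which is what the paper needs when it discusses whether entailed rejections interact with the Bonferroni count in the preceding proposition. If you want your argument to substitute for the paper's, you should either state explicitly that you are redefining the rejection rule for $H_{wm}$, or supply the missing monotonicity lemma.
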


\begin{proof}[sketch]
This transitivity property follows from the monotonicity of the $P$-values in the number of votes for each candidate, at each sample size $j$.
\qed
\end{proof}

\subsection{Sample sizes}
Because the underlying statistical test is sequential, the audit can start by looking at a single ballot selected at random,
calculate the $p$-values for all not-yet-rejected null hypotheses,
and continue to increase the sample one ballot at a time
until the risk limit has been met.
However, depending on the desired risk limit, the RLT will not be able to terminate until some
minimum number of ballots has been tallied.

The minimum sample sizes required to identify the winner with a maximum error rate of $\alpha$ are given in Table~\ref{tab:sample_sizes}, for sampling
without replacement, for a plurality contest with 2~candidates and a plurality contest with 10~candidates.
The sample sizes listed are exactly those that would be required
if the votes were unanimously for one candidate; 
if more than one candidate receives votes, the sample size becomes random and becomes stochastically larger.

Similarly, if a fraction $u$ of ballots do not have a valid vote for any candidate,
the sample size will also be random, and the expected sample size will grow by a factor of $1/(1-u)$.
For instance, if 10\% of ballots
have no vote and 90\% of ballots have a vote for candidate~A in 
a 10-candidate plurality election, the expected sample size to identify
the winner with risk limit $0.1\%$ is $17/0.9 = 18.9$ ballots.

The closer the vote is to unanimous, the fewer ballots need to be revealed (the distribution is stochastically smaller the more nearly unanimous the vote).
I.e., the protection a RLT offers is greatest when the risk is greatest.

For a two-candidate plurality election, only one of the two null hypotheses $\mu_{\ell m} \le 1/2$
can be true; thus, no multiplicity adjustment is needed.
(This is consistent with the formula $k(C-k) = 1\times(2-1) = 1$.)
For a 10-candidate plurality election, the Bonferroni adjustment factor is $1\times(10-1) = 9$.
As the table shows, if the vote is (nearly) unanimous, the number of ballots required to identify the winner
with negligible error probability is small: 35 suffices to have an error probability less than $10^{-9}$ for a two-candidate contest, and 38 suffices for a 10-candidate contest.
Because the risk drops by an order of magnitude with an increase in sample size of about 4 ballots when the vote is (nearly) unanimous, the penalty for multiplicity is low in absolute terms. If the RLT sample is drawn \emph{without} replacement, the expected sample sizes required to attain a given risk are smaller---but not by much unless the total number of ballots 
is small.

\begin{table}[]
    \centering
    \begin{tabular}{c|ccccccccc}
    candidates  & \multicolumn{9}{c}{ $\alpha$ } \\ 
    & $10^{-1}$\, & $10^{-2}$\, & $10^{-3}$\, & $10^{-4}$\, & $10^{-5}$\, & $10^{-6}$\, & $10^{-7}$\, & $10^{-8}$\, & $10^{-9}$\, \\ \hline
    2 &  5  &  9&  13& 17& 21& 24& 28& 31 & 35 \\
    10 & 9& 13& 17& 20 & 24& 27& 31& 34 &  38
    
    %
    \end{tabular}
    \caption{Minimum sample sizes to identify the winner of a two-candidate plurality contest and a 10-candidate plurality contest at risk limit $\alpha$, for sampling with replacement. Actual sample sizes approach these minima (with high probability) as voter preferences approach unanimity.}
    \label{tab:sample_sizes}
\end{table}

\section{Incorporating RLT in E2E~V Voting Protocols}\label{sec:protocol}
RLTs can be used in a straightforward way with any E2E~V scheme in which the set of encrypted votes appears on a Bulletin Board (\emph{BB}) and is applicable to either remote or in-person voting.
The encryption should be homomorphic and probabilistic: for instance, ElGamal can be used. Helios~\cite{ampq:helios}, Pr\^et \`a Voter~\cite{crs:pav}, Selene~\cite{selene}, etc., would all be amenable. 

Conceptually, we can start with a random permutations of the encrypted votes and take samples from left to right, opening more ballots as required. The verifiable shuffles used in many schemes naturally give us a random permutation.
However, we must be careful about simply taking the permutation output of the underlying scheme's shuffles, as there may be opportunities
to manipulate this and bias the sampling. 
%
%
The sampling must be truly random and demonstrably outwith the control of any entity. This brings us to the challenge of \emph{certifiable randomness}, which arises in many contexts: lotteries, voting, auctions, public ledgers etc. A number of approaches have been proposed, for example using a seed derived from a hash of prices of previously agreed stock market options at an agreed future time.  Alternative approaches involve combining  random values previously committed by a number of independent entities. Algorand~\cite{DBLP:journals/corr/Micali16} adopts such an approach combined with the use of verifiable random functions. Another possibility is to derive the seed from a cryptographic hash of suitable data posted to the $BB$. RLAs have employed seeds generated in a public ceremony of dice rolling. We might rely on a trusted third party such as the NIST random beacon service. 
For the purposes of this paper do not specify a particular approach but leave it for the stakeholders to select.

Sampling with replacement can be implemented straightforwardly by performing further mixes between samplings.

\subsection{Guaranteeing Plausible Deniability}\label{sec:fallback}

For most elections, the RLT approach will naturally leave a good proportion of unrevealed votes. However, there will be cases where the winning margins are narrow, and thus the RLT might result in all or almost all votes being revealed. 
It is not enough for a system to be (objectively) coercion resistant, it must also be seen as coercion resistant.
Thus, for the RLT approach to be effective, we must ensure that the voters will never be, nor expect to be, in a situation in which plausible deniability fails. In this section we identify such situations, and describe some strategies to deal with the potential vulnerability.

Of course, a close run referendum will not be a problem, but a problematic scenario is a close margin between candidates $X$ and $Y$, along with a low-support candidate $Z$. This could result in a full count where the low score $Z$ opens up the possibility of coercion.
We have already indicated that coercion for $Z$ is possibly harmful for the outcome of the election, as it can be used to decrease the number of votes that either $X$ or $Y$ gets.
Note also, again, that this kind of coercion is feasible not only when the voter \emph{knows} (e.g., from polls) the a close run will occur. In many cases, it suffices that the voter thinks it \emph{might} happen to get her worried and vulnerable to threats.
We propose that, in such circumstances, the system should switch to a \emph{fallback strategy} that works in all cases. Example fallback strategies are sketched below.


\para{PET testing.}
In the event of a close race between $X$ and $Y$, start Plaintext Equivalence Testing of randomly selected, unrevealed ballots against $\{X\}_{PK}$ and $\{Y\}_{PK}$, until we reach the required confidence for the winner.



\para{Tally hiding.} One can also fall back to computationally heavy methods e.g. MPC for only disclosing the winner, see e.g. \cite{Cohen86improvingprivacy,193463,10.1007/978-3-319-98989-1_17,teague}. Note that the revealed votes and reduced number of possible winners will make these methods more efficient than if used from the onset.

A possibility is to have the tellers perform a secret computation of the tally, and announce the winner(s), but not the numerical tally, on which to base a null hypothesis. This allows the RLT to be computed much more efficiently, and the secretly computed tallies can guide the appropriate strategy to adopt in the event of narrow margins.

\section{Security Assumptions}\label{sec:security}

In this section we briefly state the security guarantees and give some arguments for their validity. For the exposition below, we introduce the following three authorities besides the voters: The Tally Tellers $\TT$ holding the secret election key in a threshold manner, the Mixnet Tellers $\MT$ mixing the encrypted votes before doing the risk-limiting tally, and a random sampling authority $\RSA$ organising the random sampling of votes for the tally.

For simplicity let us also assume that the underlying voting scheme that we build on is mixnet-based, i.e., the main difference between the RLT version and the original version is that not all ciphertexts output from the mixnet are decrypted, but only a proportion of them.

In general, if $\RSA$ is acting honestly, or bound to do so e.g. via a verifiable proof based on a computational assumption, then the security reduces to that of the underlying scheme. For privacy, we normally have to trust that a threshold set of $\TT$ is not colluding and at least one server in $\MT$ is honest. For verifiability most schemes will not impose verifiable trust in $\TT$ or $\MT$ but might rely on computational assumptions and the RO-model or a CRS setup.

\para{Verifiability. } When random sampling procedure is corrupted, the adversary could possibly adjust the outcome in his favour. However, note that in this situation we can still achieve verifiability by having $\RSA$ committing to the sampling order before mixing, and assuming the last mix node is honest (or assuming one arbitrary mix node is honest and no threshold set of $\TT$ is corrupted). This will ensure that the final sampling is random.

\para{Ballot-Privacy.} Obviously, a necessary assumption for ballot-privacy is that a threshold set of $\TT$ are not colluding, and at least one mix node is honest. If the random sampling is also honest, we get strictly less information from the tally than in the original scheme, and we thus achieve better privacy in an information theoretic sense.
When using standard ballot-privacy definitions on the scheme it should also be possible to reduce the ballot-privacy to that of the underlying scheme, the only subtlety being that tally functions differ in the two schemes.

It might seem that a corruption of the random sampling procedure should not influence ballot-privacy. However, there is one assumption to make: the random sampling should, in the computational view of the adversary, be uncorrelated with the cast votes. Having input from the voters to the random sampling could indeed make sense from a verifiability viewpoint, like in Demos \cite{DBLP:conf/ccs/Kiayias0Z15}, but should not depend on the vote choice unless this is computationally hidden.

\para{Coercion-Resistance and Vote-Buying Resistance.} As we have discussed above, the RLT protocol in general improves the coercion-resistance especially when candidates are expected to have a low vote count. It would be interesting to relate this to the coercion-resistance level $\delta$ in Kusters et al. \cite{DBLP:conf/csfw/KustersTV10}. On the other hand, the security against vote-buying is not increased in the same way since the voter here has an intent to obtain a receipt. The vote buyer could indeed follow the Italian attack method and the marked ballot would often appear.

The above is reminiscent of a distinguishing example between vote-buying and coercion resistance due to Rivest:\footnote{private communication} the system chooses at random whether or not to provide the voter with a plaintext receipt. Such a system is, arguably, coercion resistant (the voter can claim to have received no receipt) but is vulnerable to vote buying (the voter might comply in the hope of getting the pay-off).

\section{Risk-Limiting Verification}\label{sec:rlv}

The idea of using risk-limiting techniques to improve coercion resistance can also be applied to verification of votes. Here, we apply the idea to Selene, allowing us to ensure that a proportion of the trackers remain unrevealed. In consequence, the coerced voter can always claim that her tracker was amongst those that remained shrouded. Some subtleties have to be handled in the case of an obnoxious coercer who demands the voter divulge their tracker; we describe those below. Indeed, these consideration require some modifications of the way Selene works.

\subsection{Risk-Limiting Verification in Selene}

A drawback of Selene, as noted in the original paper, is that when a coerced voter claims a fake tracker, the coercer (who is also a voter) could maintain that this is in fact his tracker. By construction, the coercer cannot prove this to the voter, but the voter is now in a difficult position: she knows that the claim might be true. Elaborations of the basic scheme are proposed, but they complicate things and render the verification less transparent: the final tally contains dummy votes that must be subtracted out to get the true result.

The RLT idea can be extended to avoiding revealing all the trackers in a run of a Selene election. The natural step is to apply the RLT mechanisms described above to reveal as many votes as necessary and then reveal the corresponding trackers. There would seem to be little point in revealing trackers for which the corresponding vote has not been revealed. There may, however, be some merit in revealing a subset of the trackers for which the votes have been revealed, as we discuss below.

Risk Limiting Verification (RLV), as applied to Selene, can ensure that not all trackers are revealed, thus allowing a coerced voter to simply claim that their tracker did not appear. There is still a problem, however, if we use Selene in its original form: the full set of trackers is published, so the coercer could require the voter to reveal her tracker anyway, and still claim that it is his.

We can fix this fairly easily: one purpose of revealing all the trackers in the setup phase is to demonstrate that they are all distinct, so the EA could publish a list of encrypted trackers for which the trustees run pairwise PETs to show that all the plaintexts differ. This would be computationally heavy and does not scale well, but is no worse than, e.g., JCJ~\cite{juels05coercion}. Moreover, we can use some of the approaches to linearising the JCJ-style checks, for example by raising the tracker ciphertexts to the same, secret exponent and then verifiably decrypting.

We note that we get a form of partial random checking anyway when we reveal a random sample of the trackers: if all the revealed trackers in say a 90\% random sample are distinct then we have very high confidence that they all are distinct. The drawback of this approach is that if the EA has cheated and included collisions then we will not discover this until rather late. Note, however, that we could reveal the trackers first, before revealing the votes. Now, if we find collisions, we can abort the election before any tally results have been revealed.

Still, one problem remains: another reason to publish the set of trackers is to allow voters to confirm that the $\alpha$ term sent to them is authentic: it opens the commitment to a valid  tracker, i.e., a member the published set. If we do not publish the set of trackers then we need another mechanism for voters to confirm that their notified tracker is ``valid.'' We can  achieve this by requiring that valid trackers are drawn from a negligible subset of the full space, e.g., numbers with say six digits. Now it is still intractable to produce fake $\alpha$ terms that will open a given commitment  to a member of this set, but, by adjusting the number of digits we can ensure that the chance that a fake tracker will collide with the coercer's is greatly reduced, so improving the plausible deniability.


If this reduced probability of tracker collision is deemed unacceptable, then we could allow the voter to request a fake tracker from the Notification Authority. This authority knows which valid trackers have not been assigned and so can provide an unassigned tracker to the coerced voter. This requires a level of trust in this entity, to keep tracker-related information secret but such trust is needed anyway.

There remains the question of whether all the voters should be notified of their tracker, even when their tracker has not been revealed on the BB.  The immediate thought is not to notify unrevealed trackers, but this introduces possibilities of the authorities exploiting this: leading many voters to think that their tracker was not revealed and so denying them the possibility to verify their vote. It is not clear how we could verify that all the voters whose trackers are revealed are notified, so it seems wiser to notify each voter of their tracker. 




\section{Related Work}\label{sec:relatedwork}

A number of papers \cite{Cohen86improvingprivacy,193463,10.1007/978-3-319-98989-1_17,teague} try to achieve tally hiding, either by only calculating the winner(s), or via multi-party computation and other cryptographic means.
An idea closer to RLTs is that of Random Sample Voting (RSV) by Chaum~\cite{chaumRSV}.
A scheme that seeks to implement RSV in a fully verifiable fashion is Alethea~\cite{alethea}.
RSV typically samples a small and predetermined number of voters, regardless of the margins.
In contrast, RLTs adjust the sample size to obtain the desired level of confidence in the reported outcome.


The idea of Risk-Limiting Verification is somewhat analogous to Rivest's ThreeBallot protocol~\cite{threeballot}.
Recall that, in ThreeBallot, each voter can verify a random 1/3 of her cast ballot.
Thus, RLV gives ``vote handles'' to a fraction of voters, whereas in ThreeBallot each voter gets a handle to a fraction of her vote.

\section{Conclusions}\label{sec:conclusions}

This paper presents two simple methods, RLT and RLV, for reducing the amount of information provided in the tally and verification stages. In consequence, we enhance the coercion-resistance by giving coerced voters plausible deniability, while achieving whatever confidence level in the outcome is required. An important future step will be to understand how well this method protects coerced voters in practice. It would be good also to understand better the trade-off between confidence in the outcome and plausible deniability levels.

There exist other methods that leak less information in the tally process, e.g., by using multi-party computation to only reveal the winner of the election. Such methods might be better suited to avoid strategic voting in runoff elections, and may provide somewhat better deniability.
However, those methods require more elaborate and computationally expensive cryptography; arguably, our methods are more efficient and transparent.

The novel Risk-Limiting techniques introduced here should be of independent interest and have applications beyond the RLTs and RLVs described here.





\para{Acknowledgements.}
WJ and PYAR acknowledge the support of the Luxembourg National Research Fund (FNR) and the National Centre for Research and Development (NCBiR Po\-land) under the INTER/PolLux project VoteVerif (POL\-LUX\--IV/1/2016). PBR was supported by the EU Horizon 2020 research and innovation programme under grant agreement No. 779391 (FutureTPM).

%

\end{document}